\newcommand{\HS}[1]{{\color{red}Harshank: [#1]}}
 \definecolor{BLACK}{gray}{0}
 \definecolor{WHITE}{gray}{1}
 \definecolor{RED}{rgb}{1,0,0}
 \definecolor{GREEN}{rgb}{0,1,0}
 \definecolor{BLUE}{rgb}{0,0,1}
 \definecolor{CYAN}{cmyk}{1,0,0,0}
 \definecolor{MAGENTA}{cmyk}{0,1,0,0}
 \definecolor{YELLOW}{cmyk}{0,0,1,0}
\theoremstyle{plain}
\newtheorem{theorem}{Theorem}[section]
\newtheorem{lemma}{Lemma}[section]
\newtheorem{rem}[lemma]{Remark}
\newtheorem{corollary}[lemma]{Corollary}
\begin{document}
\begin{center}{\Large \textbf{
Certifying Temporal Correlations
}}\end{center}

\begin{center}
Harshank Shrotriya\textsuperscript{1*},
Leong-Chuan Kwek\textsuperscript{1,2},
Kishor Bharti\textsuperscript{3,4,5,6}
\end{center}

\begin{center}
{\bf 1} Centre for Quantum Technologies, National University of Singapore, 3 Science Drive 2, Singapore 117543
\\
{\bf 2} MajuLab, CNRS-UNS-NUS-NTU International Joint Research Unit, UMI 3654, Singapore
\\
{\bf 3} Joint Center for Quantum Information and Computer Science and Joint Quantum Institute, NIST/University of Maryland, College Park, Maryland 20742, USA
\\
{\bf 4} A*STAR Quantum Innovation Centre (Q.InC), Institute of High Performance Computing (IHPC), Agency for Science, Technology and Research (A*STAR), 1 Fusionopolis Way, \#16-16 Connexis, Singapore, 138632, Republic of Singapore
\\

{\bf 5}  Centre for Quantum Engineering, Research and Education, TCG CREST, Kolkata 700091, India
\\

{\bf 6} Science, Mathematics and Technology Cluster, Singapore University of Technology and Design, 8 Somapah Road, Singapore 487372, Singapore
\\
* harshank.s@u.nus.edu
\end{center}

\begin{center}
\today
\end{center}

\section*{Abstract}
{\bf
Self-testing has been established as a major approach for quantum device certification based on experimental statistics with minimal assumptions. However, despite more than 20 years of research effort, most of the self-testing protocols are restricted to spatial scenarios (Bell scenarios), without many temporal generalizations known. Under the scenario of sequential measurements performed on a single quantum system, semi-definite optimization based techniques have been applied to bound sequential measurement inequalities. Building upon this formalism, we show that the optimizer matrix that saturates such sequential inequalities is unique and, moreover, this uniqueness is robust to small deviations from the quantum bound. Furthermore, we consider a generalized scenario in the presence of quantum channels and highlight analogies to the structure of Bell and sequential inequalities using the pseudo-density matrix formalism. These analogies allow us to show a practical use of maximal violations of sequential inequalities in the form of certification of quantum channels up to isometries.
}

\vspace{10pt}
\noindent\rule{\textwidth}{1pt}
\noindent\rule{\textwidth}{1pt}
\vspace{10pt}

\section{Introduction}
Rapid development of quantum technologies has necessitated the need for certification of the crucial building blocks. Such building blocks could be logical qubits in a quantum computer or entangled states and measurement schemes in a quantum communication network. Among such certification tasks, quantum state and process certification hold an important place since most of such applications require preparation of a resourceful quantum state and subsequent manipulation via quantum processes. Techniques for state certification vary in the amount of assumptions required and information gained \cite{eisert2020quantum}, full tomographic reconstruction is the most resource costly of such certification schemes while providing tight confidence bounds on the certified quantum state. On the other end of the spectrum exist techniques such as self testing that utilise properties unique to quantum correlations while requiring weaker assumptions.

Entanglement of a spatially separated bipartite quantum state allows for device independent (DI) (i.e. requiring weak assumptions) certification of the underlying state up to local degrees of freedom owing to the Bell-nonlocality of the quantum correlations obtained from such a state. Since its inception in \cite{mayers_yao}, self-testing \cite{supic_self} has become a prominent technique to certify quantum states and measurements in the DI scenario \cite{yang_navascues,parallel_selftesting,mckague_st_parallel,coladangelo2017all,sarkar_self} by exploiting the properties of quantum realizations that achieve maximal violations of Bell inequalities and their uniqueness up to local isometries. Their scope has been extended to semi-device independent \cite{supic_epr_steering,gheorghiu2017rigidity,goswami_SDI,RST_steering,sarkar2021self}
and single party contextuality based scenarios \cite{kishor_self_test,contextuality_sumofsquares,hu2022self}. Applications of self testing in spatial scenarios have been utilised for device-independent (DI) randomness generation \cite{random_certified,random_2,random_3,random_4,random_5,random_6}, DI quantum cryptography via quantum key distribution \cite{ekert2014ultimate,mayers1998_crypto,crypto_1,crypto_2} and delegated quantum computing \cite{gheorghiu2019verification,RUV_classical,mckague2013interactive} among others.

Quantum correlations can also arise in a temporal setting where correlations are considered between measurement outcomes performed in a sequence \cite{brukner2004quantum, costa2018unifying, kull2019spacetime, temporal_correlations_geometry,fritz2010quantum}. In an analogy with spatial correlations, sequential measurement scenario also allows for the interpretation of such correlations as temporal correlations since measurements are done on the same state at different times. If commuting, such sequential measurements form a context and hence can be used to construct non-contextuality inequalities \cite{kochen1975problem}. However, in general the constraint of commuting sequential measurements can be lifted and such sequential correlations can be used to construct another class of inequalities such as the Leggett-Garg inequality \cite{leggett1985quantum}. For the particular case of Leggett-Garg inequality, the classical bound is obeyed by correlations that obey the postulates of macro-realism. Authors in \cite{otfried_seq} proposed a semi-definite programming (SDP) based method to find quantum bounds of sequential correlation-based inequalities. Temporal correlations, as obtained in the sequential measurement scenarios, have yet to find applications in quantum technologies save in few cases such as \cite{temporal_dim_witness} where authors show that genuine temporal correlations can be used to certify the minimum dimension of the underlying quantum state. In \cite{mohan2019sequential} authors consider the prepare-transform-measure scenario to certify a single qubit preparation (state) and intermediate measurements up to global unitary degree of freedom based on sequential correlation based witnesses. More recently, in \cite{das2022temporal}, a sequential measurement scenario has been considered where maximal violations of temporal inequalities were used to certify arbitrary outcome measurements under certain assumptions regarding the underlying state. An extension of sequential measurements to Bell scenarios is considered in \cite{bowles2020bounding} where an infinite hierarchy of SDPs is shown to bound the set of sequential correlations.

A unified framework to study spatio-temporal correlations for measurement events described across space and time has been proposed in \cite{PDO_first} where the authors introduce the Pseudo-density matrix (PDM) formalism. Using this approach, in \cite{temporal_teleport}, it was shown how time evolution of a qubit can be viewed as temporal teleportation in analogy to the spatial teleportation protocol that utilizes an entangled state as a resource \cite{spatial_teleport}. Further, in \cite{temporal_correlations_geometry}, the PDM formalism was utilized to characterize the geometry of spatio-temporal correlations arising from Pauli measurements of a qubit.

In this work, we study correlations arising from sequential measurements on a single party system and its generalizations to scenarios with quantum channels. Since such correlations cannot be used to certify the underlying state or measurements due to the single system isometries involved, we identify the correlation matrix itself as the candidate for certification. We start by reviewing in Section \ref{sec:prereq} the known results and essential concepts upon which we build this work. In Section \ref{sec:uniqueness} we utilise the SDP-based formulation to show that maximal violations of $N$-cycle sequential inequality leads to properties such as uniqueness of the optimizer matrix and also show the robustness of this uniqueness with small deviations from the maximal violation of the inequality. Next in Section \ref{sec:general_chn}, we consider a generalized scenario where quantum channels act between the sequence of measurements. Interestingly, it turns out that the SDP based optimization methods do not apply to this generalized scenario and to circumvent this hurdle we use the PDM formalism to put spatial (Bell) and temporal (sequential) inequality violations on the same footing. The PDM based approach enables us to formulate and prove results about the underlying channels along with highlighting some deeper connections between spatial and temporal correlation based scenarios. In Section \ref{sec:physical}, we present some physical implications of our main results along with applications. Finally, we conclude with a discussion of our results and future outlook in Section \ref{sec:conclusion}. 

\section{Pre-requisites}
\label{sec:prereq}
In this section, we briefly review some relevant concepts that would be utilized later to elucidate our results.

\subsection{Pseudo Density Matrices}
\label{PDM_appendix}
The density matrix of a system contains information of all possible Pauli observables acting on the system thus the density matrix has information of all correlations between spatially separated subsystems. The Pseudo-Density Matrix (PDM) formalism was introduced in \cite{PDO_first} by extending this analogy to account for causal correlations between observables acting on the same subsystem at different time points. That is to say, a PDM could be associated with any $n$-measurement event with there being a one-to-one correspondence between the $n$-measurement correlation values and the related PDM. The PDM of an $n$-measurement event is defined as
\begin{equation}
R=\frac{1}{2^{n}}\sum_{i_{1}=0}^{3}\cdots\sum_{i_{n}=0}^{3}\langle\{\sigma_{i_{j}}\}_{j=1}^{n}\rangle\bigotimes_{j=1}^{n}\sigma_{i_{j}}
\end{equation}
where $\sigma_{0}=\mathrm{I}$ and $\sigma_i$ for $i \in \{1,2,3\}$ are the familiar Pauli matrices. Further, $\{\ldots\}$ denotes a set of operators associated with the $n$-measurement event and not to be confused with anti-commutator. The sub-indices $j$ go over all measurement
events which could be done in a sequence on a single qubit, on spatially
separated qubits or a combination of both thus accounting for both spatial and causal correlations arising due to the $n$ measurement events. The factor $\langle\{\sigma_{i_{j}}\}_{j=1}^{n}\rangle$
denotes the correlation term corresponding to the size-$n$ measurement events. Physically, it is the expectation value of the product of the $n$ Pauli observables where tensor structure is appropriately imposed based on the spatial location of the measurement event. It should be noted that regardless of whether the measurements are performed at spatially separated qubits or not, the tensor structure is enforced on the operator $\bigotimes_{j=1}^{n}\sigma_{i_{j}}$ attached to the correlation factor $\langle\{\sigma_{i_{j}}\}_{j=1}^{n}\rangle$. This way of defining PDMs gives them their unique features that we briefly describe below. The properties retained by the PDM $R$ are hermiticity and
unit trace. If all the measurement events are performed on distinct qubits then $R$ is positive semi-definite (PSD) and thus a valid density matrix, however; in the presence of causal or temporal correlations
$R$ is not necessarily PSD with possible negative eigenvalues. As a measure of causality, authors in \cite{PDO_first} introduced $f_{tr}(R)$=$\left\Vert R\right\Vert _{tr}-1$. Let us expound on PDMs using the following example,\\
Example: Consider a qubit system, initially in state $|0\rangle$, on which 2-measurement events $E_1$ and $E_2$ are made in a sequence. Next, on calculating $\langle \{ \sigma_{i_1}, \sigma_{i_2} \} \rangle$ associated with the 2 events, we observe that only the following terms are non-zero,
\begin{align}
\langle \{ \sigma_1, \sigma_1 \} \rangle = 1, \quad \langle \{ \sigma_2, \sigma_2 \} \rangle = 1, \quad & \langle \{ \sigma_3, \sigma_3 \} \rangle = 1, \nonumber \\
\langle \{ \mathbb{I}, \mathbb{I} \} \rangle = 1, \quad \langle \{ \sigma_3, \mathbb{I} \} \rangle = 1, \quad & \langle \{ \mathbb{I}, \sigma_3 \} \rangle = 1.
\end{align}
Intuitively it is clear that since the two measurement events are performed at the same qubit in a sequence, the set $\{ \sigma_{i}, \sigma_{i} \}$ would have perfect correlation giving $\langle \{ \sigma_1, \sigma_1 \} \rangle = 1 \quad \forall \, i$. Subsequently, the PDM $R_{ex}$ for this sequence of 2 events on a single qubit system is
\begin{align*}
    R_{ex} =& \frac{1}{4}\sum_{i_1} \sum_{i_2} \langle \{ \sigma_{i_1}, \sigma_{i_2} \} \rangle (\sigma_{i_1} \otimes \sigma_{i_2}) \nonumber \\
    =& \begin{bmatrix}
        1 & 0 & 0 & 0 \\
        0 & 0 & 1/2 & 0 \\
        0 & 1/2 & 0 & 0 \\
        0 & 0 & 0 & 0 \\
        \end{bmatrix} 
\end{align*}
with eigenvalues $0,1,\frac{1}{2}, -\frac{1}{2}$. We see that $R_{ex}$ has a negative eigenvalue since the 2 measurement events $E_1$ and $E_2$ are causally related by virtue of acting on the same qubit system.

\subsection{Semidefinite programming basics}\label{SDP}
An $n \times n$ matrix $X$ is said to be positive semidefinite (PSD) if and only if $v^T X v \geq 0$ for any $v\in \mathbb{R}^n$, denote PSD matrix $X$ by $X \succeq 0$. It can be easily checked that the set of PSD matrices forms a convex cone. Then, a semidefinite program (SDP) is an optimization problem of the form
\begin{align}
    \mathrm{maximize:}\quad & \mathrm{Tr}(CX) \nonumber\\
    \mathrm{subject\,to}:\quad & \mathrm{Tr}(A_i X) = b_i \quad ,i=1,2,\ldots, m \\
                            & X\succeq 0,   \nonumber    
\end{align}
where we note that the objective function $\mathrm{Tr}(CX)$ is linear in $X$ and the $m$ linear equation constraints are given by $m$ matrices $A_1,\ldots, A_m$ and the $m$-vector $b$. If the optimal value for $\text{Tr}(CX)$ exists and is finite then it is called the primal optimal value ($p^*$) and it is attained at the primal optimal solution ($X^*$).
Referring to the problem above as the primal (P) SDP, we can define the dual of it as given below.\\
\textit{Semidefinite programming duality}: The dual (D) problem (SDD) of the above SDP is defined to be 
\begin{align}
    \mathrm{minimize:}\quad & y^{T}b = \sum_{i=1}^m y_i b_i \nonumber\\
    \mathrm{subject\,to}:\quad & \sum_{i=1}^m y_i A_i - C = S \\
                            & S\succeq 0,   \nonumber    
\end{align}
If the optimal value for $\sum_{i=1}^m y_i b_i$ exists and is finite then it is called the dual optimal value ($d^*$) and is attained at the dual optimal solution ($y^*$). Next we summarize some relevant results from SDP duality theory.
\begin{theorem}
Consider a pair of primal (P) and dual (D) SDPs. The following holds: \\

\begin{itemize}
    \item (Complementary slackness) Let $X,(y, S)$ be a pair of primal-dual feasible solutions for (P) and (D), respectively. Assuming that $p^* = d^*$ we have that $X,(y, S)$ are primal-dual optimal if and only if $\langle X, S \rangle = 0$. \\
    \item (Strong duality) Assume that $d^* > -\infty$ (resp. $p^* < +\infty$) and that (D) (resp. (P)) is strictly feasible. Then $p^* = d^* $ and furthermore, the primal (resp. dual) optimal value is attained.
\end{itemize}
\end{theorem}

\textit{Dual nondegeneracy:} Let $Z^*$ be an optimal dual solution and let $M$ be any symmetric matrix. If the homogeneous linear system 
\begin{equation}
MZ^* = 0,
\end{equation}

\begin{equation}
\textrm{Tr}(MA_i) = 0 \ (\forall i \in [m]),
\end{equation}
only admits the trivial solution $M=0$, then $Z^*$ is said to be dual nondegenerate.

\subsection{Bounding temporal correlations via SDP}
\label{subsec_otfried}
Consider the scenario of sequential measurements performed on a single qudit (of dimension $d$) state. Each of the measurement observable $\{A_{i}\}$ gives a binary value outcome labeled by $\left(\pm1\right)$. Quantum bounds on linear expressions constructed from correlations between sequential measurements of binary valued  observables can be obtained using a semi-definite programming (SDP) based approached as shown in \cite{otfried_seq}. For an expression $C=\sum_{ij}\lambda_{ij}X_{ij}$ where $\lambda_{ij}$ are the coefficients corresponding to the term $X_{ij}=\langle A_{i}A_{j}\rangle_{seq}$ with

\begin{equation} \label{eq:seq_cor}
    \langle A_{i}A_{j}\rangle_{seq}=\frac{1}{2}\left[Tr\left(\rho A_{i}A_{j}\right)+Tr\left(\rho A_{j}A_{i}\right)\right]
\end{equation} the optimization problem is given by;

\begin{align}
\mathrm{maximize:}\, & \sum_{ij}\lambda_{ij}X_{ij},\\
\mathrm{subject\,to}:\, & X=X^{T}\succeq0\;\mathrm{and\,}X_{ii}=1\;\forall\,i \label{eq:seq_sdp}
\end{align}

The constraint $X\succeq0$ follows from the fact that $X$ is the real part of matrix $Y$ with $Y_{ij}=\mathrm{Tr}\left[\rho(A_{i}A_{j})\right]$ and $v^{T}Yv\geq0$
for any real vector $v$. The result of the optimization program is an optimizer matrix, denoted by $X^{opt}$, which achieves the maximum for $\sum_{ij}\lambda_{ij}X_{ij}$. Consider a special case of the expression above which we call $S_N$, given by
\begin{equation}
    S_{N} \equiv \sum_{i=1}^{N-1}\langle A_{i}A_{i+1}\rangle_{seq}-\langle A_{N}A_{1}\rangle_{seq} \label{eq:sn}
\end{equation}
 with $N \geq 3$. $S_N$ has a classical bound $N-2$ \cite{araujo2013all}, using which we can define an N-cycle inequality as $S_N \leq N-2$ which is obeyed by all macro-realistic theories. Invoking the strong duality for the corresponding SDP gives the quantum bound $S_{N}\leq N\cos\left(\frac{\pi}{N}\right)$ \cite{otfried_seq}. In other words, the primal optimal of the SDP being bounded implies that the dual SDP is also feasible and moreover the primal and the dual optima are equal (Strong Duality Theorem).

\begin{rem} \label{col_X}
 Reinterpreting the correlation term $X_{ij}$ as an inner product
of unit vectors $\{x_{i}\}$, we can write $X_{ij}=\left(x_{i},x_{j}\right)$
thus obtaining $\{x_{i}\}$ as the columns of the matrix $\sqrt{X}$.
As a consequence of this reinterpretation, for every positive semi-definite matrix $X$ one can find a set $\{x_{i}\}$ and a set
of binary ($\pm1$) outcome observables $\{A_{i}\}$ such that 
\begin{equation}
    \langle A_{i}A_{j}\rangle_{seq}=Tr\left[\frac{1}{2}\rho\left(A_{i}A_{j}+A_{j}A_{i}\right)\right]=\left(x_{i},x_{j}\right), \label{eq:unit_vecs}
\end{equation}
for \textit{all} quantum states $\rho$. Thus, for every quantum state $\rho$ there exist observables $\{A_{i}\}$ such that $X_{ij}^{opt}=\langle A_{i}A_{j}\rangle_{seq}=Tr\left[\frac{1}{2}\rho\left(A_{i}A_{j}+A_{j}A_{i}\right)\right]$ which maximally violate the N-cycle inequality $S_N \leq N-2$. Further, in the qubit case it can be seen that taking $A_i = \Vec{a_i}.\Vec{\sigma}$ gives $\langle A_{i}A_{j}\rangle_{seq} = \Vec{a_i}.\Vec{a_j}$ implying that the correlations do not depend on the underlying quantum state.
\end{rem}

One interesting aspect of this approach is the fact that in the temporal scenario a single SDP suffices to find a tight upper bound in the general case which is in contrast with the infinite (NPA) hierarchy of SDPs required in the general case to tightly bound Bell inequalities in the spatial correlation scenario as shown in \cite{npa_prl,npa_njp}. This significant increase in complexity can be attributed to the fact that the observables acting on the spatially separated halves of the entangled state are required to commute but no such restriction is placed on observables acting on a single quantum system at distinct time points in the sequential measurement scenario.

\section{Certification of temporal correlations using optimizer matrix \label{sec:uniqueness}}
In this section, we build on the methodology outlined in the previous section by proving results for the optimal correlation set that maximally violates the N-cycle inequality. Concretely, our first main result is regarding the uniqueness of this optimal set and stated as,
\begin{theorem} 
The optimizer matrix $X^{opt}$ that maximizes the objective function
$S_N$ in \eqref{eq:sn} is unique for all $N \geq 3$.
\label{thm:uniqueness}
\end{theorem}

In order to prove this result we need to formulate the dual program that exists for every primal semi-definite program and utilize relevant results from SDP duality theory. The dual of the SDP in ~\eqref{eq:seq_sdp} with the choice of objective function as $S_N$ \eqref{eq:sn} is given by
\begin{equation}
\min\text{ }\sum_{i}y_{i} \nonumber
\end{equation}
such that
\begin{equation}
\sum_{i=1}^{N} y_{i} e_{i}e_{i}^{T}-\Lambda \succcurlyeq 0,\label{eq:dual_sdp}
\end{equation}
where $e_{i}$ is an $N\times1$ column vector with $1$ at the $i^{\text{th}}$ place and $0$ elsewhere and \begin{equation*}
    \Lambda = -0.5\left(e_{1}e_{N}^{T}+e_{N}e_{1}^{T}\right)+\sum_{i=1}^{N-1}0.5\left(e_{i}e_{i+1}^{T}+e_{i+1}e_{i}^{T}\right).
\end{equation*}
Moreover, the primal and the dual are both feasible with the primal and dual optima being equal to $N \cos{\frac{\pi}{N}}$. Then let us first prove the following intermediary result,
\begin{lemma}  \label{dual_optimal}
The dual optimal solution for the dual program in~\eqref{eq:dual_sdp} is given by

\begin{equation}
W_{N}=\cos\left(\frac{\pi}{N}\right)\mathbb{I}_{N}+0.5\left(e_{1}e_{N}^{T}+e_{N}e_{1}^{T}\right)-\sum_{i=1}^{N-1}0.5\left(e_{i}e_{i+1}^{T}+e_{i+1}e_{i}^{T}\right),\label{eq:dual_optimal}
\end{equation}

where $\mathbb{I}_{N}$ is an $N\times N$ identity matrix and $e_{i}$
is an $N\times1$ column vector with $1$ at the $i$th place and
$0$ elsewhere. 
\end{lemma}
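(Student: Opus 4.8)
The plan is to recognise \eqref{eq:seq_sdp} as an instance of the standard-form primal SDP of Appendix~\ref{SDP} and then guess-and-verify the dual optimal point. The equality constraints are $\mathrm{Tr}(e_ie_i^{T}X)=1$ for $i\in[N]$, and the objective is $\mathrm{Tr}(CX)$ where $C$ is the symmetric coefficient matrix of the $N$-cycle inequality, i.e. the only nonzero entries of $C$ are $C_{i,i+1}=C_{i+1,i}=\tfrac12$ for $1\le i\le N-1$ and $C_{1,N}=C_{N,1}=-\tfrac12$, so that $\mathrm{Tr}(CX)=S_N$. The dual SDP of Appendix~\ref{SDP} then reads: minimise $\sum_{i=1}^{N}y_i$ subject to the slack $S=\mathrm{diag}(y)-C\succeq 0$. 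Guided by the cyclic symmetry of $S_N$ I would try the symmetric ansatz $y_i=\cos(\pi/N)$ for all $i$; the corresponding slack is $S=\cos(\pi/N)\mathbb{I}_N-C$, and substituting the explicit entries of $C$ one checks at once that this is precisely the matrix $W_N$ of \eqref{eq:dual_optimal}. It then remains to show (i) $W_N\succeq 0$ (dual feasibility) and (ii) that it attains the dual optimum.

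For (i), feasibility amounts to $\lambda_{\max}(C)\le\cos(\pi/N)$, so I would compute the spectrum of $C$ explicitly. Writing $C=\tfrac12 B$, the eigenvalue equation $Bv=\mu v$ is the three-term recurrence $v_{j-1}+v_{j+1}=\mu v_j$ at the interior indices, while the single sign-flipped edge $C_{1,N}=-\tfrac12$ converts the boundary conditions into \emph{antiperiodic} ones, $v_{j+N}=-v_j$ (equivalently $v_0=-v_N$ and $v_{N+1}=-v_1$). Substituting $v_j=z^j$ forces $z^N=-1$, hence $z=e^{i(2k+1)\pi/N}$ and $\mu=z+z^{-1}=2\cos((2k+1)\pi/N)$ for $k=0,\dots,N-1$; the eigenvalues of $C$ are therefore $\cos((2k+1)\pi/N)$, whose maximum over $k$ equals $\cos(\pi/N)$ (attained at $k=0$ and $k=N-1$, all other values being strictly smaller). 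Consequently $W_N=\cos(\pi/N)\mathbb{I}_N-C\succeq 0$, with $0$ as its smallest eigenvalue, so $W_N$ is dual feasible; note that it is automatically rank-deficient, a fact the subsequent proof of Theorem~\ref{thm:uniqueness} will use.

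For (ii), the dual objective at $W_N$ equals $\sum_i y_i=N\cos(\pi/N)$, which matches the primal optimal value $p^{*}=N\cos(\pi/N)$ already recorded for \eqref{eq:seq_ineq} (via strong duality following \citep{otfried_seq}; strong duality is legitimate here because $X=\mathbb{I}_N$ is strictly feasible for the primal). By weak duality a dual feasible point whose objective equals $p^{*}$ is dual optimal, so $W_N$ is the dual optimum; equivalently, one may exhibit a primal optimal $X^{opt}$, namely the Gram matrix of $N$ planar unit vectors with successive vectors at angle $\pi/N$ and orientation chosen to match the sign pattern of $C$, verify $\langle X^{opt},W_N\rangle=0$, and invoke complementary slackness (Appendix~\ref{SDP}) — this route has the bonus of simultaneously producing the $X^{opt}$ needed later. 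The only genuinely nontrivial step is the spectral calculation in (i): correctly extracting the antiperiodic boundary condition from the lone negative edge, performing the maximisation over $k$, and matching signs so that $\cos(\pi/N)\mathbb{I}_N-C$ is literally the displayed $W_N$.
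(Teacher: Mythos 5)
Your proposal is correct, and its skeleton coincides with the paper's: you form the same dual (minimise $\sum_i y_i$ with slack $\mathrm{diag}(y)-C\succeq 0$, which is exactly the form \eqref{eq:dual_constraint} used in the paper's proof), make the same symmetric ansatz $y_i=\cos(\pi/N)$, and conclude optimality from dual feasibility plus the fact that the dual value $N\cos(\pi/N)$ matches the primal optimum. The genuine difference lies in how positive semidefiniteness of $W_N$ is verified. The paper writes $W_N=\cos(\pi/N)\mathbb{I}_N+0.5\,T_N$ and computes $\det(T_N-\lambda\mathbb{I}_N)$ with a transfer-matrix (trace of products of $2\times 2$ matrices) formula for tridiagonal matrices with corner entries, which forces a separate, somewhat informal treatment of the $\lambda=2$ case where the transfer matrix is not diagonalizable. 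You instead diagonalize $C$ directly by plane-wave eigenvectors $v_j=z^j$, observing that the single negative edge imposes antiperiodic boundary conditions $z^N=-1$; this yields the full spectrum $\cos((2k+1)\pi/N)$ in one stroke, identifies $\lambda_{\max}(C)=\cos(\pi/N)$, and sidesteps the special case entirely — arguably a cleaner route to the same eigenvalues (the paper's $-2\cos((2m+1)\pi/N)$ for $T_N$ is the same spectrum up to the factor $-2$). You are also slightly more explicit than the paper about why value-matching suffices (weak duality, with strict primal feasibility of $\mathbb{I}_N$ justifying strong duality), and your optional complementary-slackness route via the planar Gram matrix with successive angle $\pi/N$ is a valid alternative that additionally exhibits the primal optimizer; the paper simply takes the quantum bound $N\cos(\pi/N)$ for \eqref{eq:seq_ineq} as known.
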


\begin{proof}
Given the dual formulation in \eqref{eq:dual_sdp}, we start by claiming that the optimal choice of $\{y_i\}$ that achieves the dual optima $N \cos{\frac{\pi}{N}}$ is such that 
\begin{equation*}
    \sum_{i=1}^{N} y_{i} e_{i}e_{i}^{T}-\Lambda = W_N \implies y_i = \cos{\frac{\pi}{N}} \quad \forall i
\end{equation*}
On comparing coefficients we see that the choice above results in  $\sum_i y_{i} = N\cos\left(\frac{\pi}{N}\right)$ which agrees with the dual optimal value. It remains to be shown that $W_{N}$ is positive semidefinite for all $N \geq 3$. We proceed by rewriting $W_N$ as
\begin{equation}
    W_N = c_N \mathbb{I}_{N} + 0.5  \underbrace{\begin{bmatrix}
   0 & -1 & 0 & \cdots & \cdots & 1 \\
   -1 & 0 & -1 & \cdots & \cdots & 0 \\
   0 & -1 & 0 & -1 & \cdots & 0 \\
   \vdots  & \vdots & \vdots  & \ddots  & \vdots & \vdots  \\
   \vdots  & \vdots & \vdots  & \vdots & \ddots & \vdots  \\
   1 & 0 & \cdots & \cdots & -1 & 0 
 \end{bmatrix}}_{\textrm{call this matrix } T_N}
\end{equation}
where $c_N = \cos{\frac{\pi}{N}}$. Note that $W_N$ is PSD if all eigenvalues of $T_N$ are greater than or equal to $-2c_N$ which is what we show next. In order to find the eigenvalues of $T_N$, consider the determinant
\begin{equation}
    |T_N - \lambda \mathbb{I}_N| = \begin{vmatrix}
   -\lambda & -1 & 0 & \cdots & \cdots & 1 \\
   -1 & -\lambda & -1 & \cdots & \cdots & 0 \\
   0 & -1 & -\lambda & -1 & \cdots & 0 \\
   \vdots  & \vdots & \vdots  & \ddots  & \vdots & \vdots  \\
   \vdots  & \vdots & \vdots  & \vdots & \ddots & \vdots  \\
   1 & 0 & \cdots & \cdots & -1 & -\lambda 
 \end{vmatrix}
\end{equation}
The matrix above is an ordinary tridiagonal matrix and the determinant can be evaluated using the following formula \cite{molinari2008determinants} involving multiplication of $2 \times 2$ matrices 
\begin{align}
    \begin{vmatrix}
   a_1 & b_1 & \cdots & c_0 \\
   c_1 & a_2 & b_2 & \vdots \\
   \vdots & \ddots & \ddots & \vdots \\
   b_n & \cdots & c_{n-1} & a_n
 \end{vmatrix} = &(-1)^{n+1} (b_n \cdots b_1 + c_{n-1} \cdots c_0) +  \nonumber \\
  &\textrm{Tr}\left[ \begin{pmatrix} a_n & -b_{n-1}c_{n-1} \\ 1 & 0 
 \end{pmatrix} \cdots
 \begin{pmatrix} a_2 & -b_{1}c_{1} \\ 1 & 0 
 \end{pmatrix}
 \begin{pmatrix} a_1 & -b_{n}c_{0} \\ 1 & 0 
 \end{pmatrix} \right]
\end{align}
Plugging the values of matrix elements for our case gives 
\begin{equation}
    |T_N - \lambda \mathbb{I}_N| = (-1)^{N+1} (2(-1)^{N-1}) + \textrm{Tr} \left[
    \begin{pmatrix} -\lambda & -1 \\ 1 & 0 
 \end{pmatrix}^{N} \right]
\end{equation}
Note that the matrix $P \equiv \begin{pmatrix}
 -\lambda & -1 \\ 1 & 0
\end{pmatrix}$ has eigenvalues $\mu = \frac{-\lambda + \sqrt{\lambda^2-4}}{2}$ and $\frac{1}{\mu}$. Also, $P$
is not diagonalizable for $\lambda = 2$ which is an eigenvalue of $T_N$ for odd $N$. However, $W_N$ remains PSD in this case.\\
For $\lambda \neq 2$, we have 
\begin{equation}
    \textrm{Tr}\left[ \begin{pmatrix}
     -\lambda & -1 \\ 1 & 0
    \end{pmatrix}^N \right] = \mu^N + \frac{1}{\mu^N}
\end{equation}
giving 
\begin{equation}
    |T_N - \lambda \mathbb{I}_N| = 2 + \mu^N + \frac{1}{\mu^N}
\end{equation}
The expression above vanishes if $(\mu^N + 1)^2=0$. The roots of $\mu^N + 1 = 0$ are $\mu = e^{i(2m+1)\pi/N}$ with $m \in \mathbb{Z}$. Thus,
\begin{align}
    \frac{-\lambda + \sqrt{\lambda^2-4}}{2} &= e^{i(2m+1)\pi/N} \nonumber \\
    \sqrt{\lambda^2-4} &= \lambda + 2e^{i(2m+1)\pi/N} \nonumber \\ 
    \lambda^2 - 4 &= \lambda^2 + 4e^{i2(2m+1)\pi/N} + 4\lambda e^{i(2m+1)\pi/N} \nonumber \\
    \lambda &= -2 \cos{\frac{(2m+1)\pi}{N}} \geq -2\cos{\frac{\pi}{N}} \quad \forall \, m \textrm{ and all }N \geq 3
\end{align}
Therefore, $W_N$ is PSD for all $N \geq 3$ concluding the proof of Lemma \ref{dual_optimal}, that the choice $W_N$ leads to the dual optimal solution.
\end{proof}

\begin{lemma} \label{all_zero}
The only solution to the system of linear equations 
\[
X_{N}W_{N}=0
\]
 is $X_{N}=0$, where $X_{N}$ is a symmetric $N\times N$ matrix
with diagonal elements equal to zero and $W_{N}$ is the dual optimal solution given in \eqref{eq:dual_optimal}.
\end{lemma}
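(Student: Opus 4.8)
The plan is to reduce everything to an explicit description of $\ker W_N$. Since $W_N$ is symmetric, the equation $X_N W_N = 0$ says exactly that every row of $X_N$ lies in $\ker W_N$ (and then, $X_N$ being symmetric, so does every column). So the first task is to compute $\ker W_N$. Writing $W_N = \cos(\pi/N)\,\mathbb{I}_N + \tfrac12 T_N$ as in the proof of Lemma~\ref{dual_optimal}, I would diagonalize $T_N$ by the ansatz $v^{(j)} = \omega^{j}$: the interior rows $2 \le i \le N-1$ force the $T_N$-eigenvalue $-(\omega + \omega^{-1})$, while row $1$ (with its $+1$ in column $N$) and row $N$ (with its $+1$ in column $1$) both force the twisted periodicity condition $\omega^{N} = -1$. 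The $N$ roots $\omega_m = e^{i(2m+1)\pi/N}$ give a full eigenbasis (Vandermonde), and the $W_N$-eigenvalue attached to $\omega = e^{i\phi}$ is $\cos(\pi/N) - \cos\phi$. Among the $N$-th roots of $-1$ this vanishes precisely for $\phi = \pm\pi/N$, so
\[
\ker W_N = \operatorname{span}_{\mathbb{R}}\{u,w\},\qquad u_j = \cos\!\big(\tfrac{j\pi}{N}\big),\quad w_j = \sin\!\big(\tfrac{j\pi}{N}\big),
\]
which is two-dimensional ($u,w$ are linearly independent for $N \ge 2$).

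Granting this, the rows of $X_N$ lie in $\operatorname{span}\{u,w\}$, and since $X_N$ is symmetric its column space equals its row space; by the spectral decomposition $X_N = \sum_k \lambda_k v_k v_k^{T}$, every $v_k$ with $\lambda_k \ne 0$ lies in $\operatorname{span}\{u,w\}$, hence
\[
X_N = a\, uu^{T} + b\,(uw^{T} + wu^{T}) + c\, ww^{T}
\]
for some $a,b,c \in \mathbb{R}$. The zero-diagonal hypothesis then reads $a\cos^2(j\theta) + 2b\cos(j\theta)\sin(j\theta) + c\sin^2(j\theta) = 0$ for $j = 1,\dots,N$ with $\theta = \pi/N$; equivalently the $\pi$-periodic trigonometric polynomial
\[
q(t) = \tfrac{a+c}{2} + \tfrac{a-c}{2}\cos 2t + b\sin 2t
\]
vanishes at the points $t \equiv j\pi/N \pmod{\pi}$, $j=1,\dots,N$, which are $N$ distinct points in one period. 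A nonzero function of the form $A + B\cos 2t + C\sin 2t$ has at most two zeros per period, so for $N \ge 3$ we must have $q \equiv 0$, whence $a+c = a-c = b = 0$, i.e.\ $a = b = c = 0$ and $X_N = 0$.

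I expect the only real obstacle to be the second step: getting the twisted boundary condition $\omega^{N} = -1$ correct (not the naïve $\omega^{N}=1$) and verifying that the $\pm\pi/N$ modes are the \emph{only} zero modes of $W_N$; once $\ker W_N$ is pinned down, Steps beyond it are routine linear algebra and elementary trigonometry, and it is satisfying that the hypothesis $N \ge 3$ enters exactly where needed (three distinct zeros of $q$ in a single period). As a remark, this lemma is precisely the assertion that the dual optimum $W_N$ is dual nondegenerate in the sense of the Definition above (the constraint matrices here being $e_i e_i^{T}$, so $\operatorname{Tr}(M A_i) = M_{ii}$), which is what is plugged into Theorem~\ref{alizadeh} to deduce uniqueness of $X^{opt}$ in Theorem~\ref{thm:uniqueness}.
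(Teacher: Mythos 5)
Your proof is correct, and it takes a genuinely different route from the paper's. The paper treats $X_N W_N = 0$ as a bare linear system and sketches a row-by-row elimination: each equation involves at most three unknowns, the two-variable equations coming from the boundary rows of $W_N$ fix ratios of entries (the constant $2\cos(\pi/N)$), and back-substitution forces every entry of each row to zero; the details are left to the reader. You instead determine $\ker W_N$ exactly, by diagonalizing the twisted-circulant part $T_N$ with the ansatz $v_j=\omega^j$ subject to the boundary condition $\omega^N=-1$ (consistent with the spectrum the paper itself obtains in Lemma~\ref{dual_optimal} via the tridiagonal-determinant formula), concluding that the kernel is the two-dimensional real span of $u_j=\cos(j\pi/N)$ and $w_j=\sin(j\pi/N)$; symmetry of $X_N$ then confines $X_N$ to the span of $uu^{T}$, $uw^{T}+wu^{T}$, $ww^{T}$, and the zero-diagonal hypothesis becomes a first-degree trigonometric polynomial in $2t$ vanishing at $N$ distinct points of a $\pi$-period, hence identically zero once $N\ge 3$. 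The delicate points all check out: both boundary rows give the same twisted condition $\omega^N=-1$, the Vandermonde argument guarantees a complete eigenbasis so no further zero modes exist, and the zero-counting step is where $N\ge 3$ is genuinely needed. What your approach buys is rigor and structure where the paper only gestures: it isolates exactly where each hypothesis (symmetry, zero diagonal, $N\ge 3$) enters, and it exhibits $\ker W_N$ as the span of the cosine/sine vectors, i.e.\ the range of the optimal Gram matrix, making the complementary-slackness picture explicit; your closing remark that the lemma is precisely dual nondegeneracy of $W_N$ with constraint matrices $A_i=e_ie_i^{T}$ is indeed exactly how it is consumed via Theorem~\ref{alizadeh}. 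The paper's elimination is more elementary (no spectral input) but, as written, is only a sketch, whereas yours is a complete argument.
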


\begin{proof} The proof is trivial and follows from simple linear algebra.
There are $N^{2}$ linear equations where the maximum number of variables
in a equation is three. Let us consider the equations with variables
corresponding to the first row of $X_{N}$. There are $N$ such linear
equations. Furthermore, there are three equations with number of variables
equal to two. These three equations fix the value of the variables
involved either equal to each other (one such equation) or a constant
times the other variable. Here, the constant is $2\cos{\frac{\pi}{N}}$. Substituting these constraints in the equations
with three variables, we get new equations with two variables. Following
this approach, we get the value of all the variables corresponding row $1$ of $X_N$ equal to zero as the only self-consistent solution.
Same argument applies for the variables in the other rows. This completes
the proof.
\end{proof}

\begin{lemma} (taken from ~\cite{alizadeh}) \label{alizadeh}
Let $Z^*$ be a dual optimal and non-degenerate solution of a semi-definite program. Then, there exists a unique primal optimal solution for that SDP. 
\end{lemma}

Using the lemmas above, the proof of Theorem~\ref{thm:uniqueness} follows as

\begin{proof}
Lemmas ~\ref{dual_optimal} and \ref{all_zero} imply that the dual optimal $W_N$ is non-degenerate. Together with Lemma~\ref{alizadeh}, this implies that the primal optimal for the SDP in~\eqref{eq:seq_sdp} with objective function as in \eqref{eq:sn} is unique. This completes the proof of Theorem~\ref{thm:uniqueness}.
\end{proof}

We mentioned in Remark \ref{col_X} that the matrix $X$ that is being optimized over is PSD and gives us the set of unit vectors $\{x_i\}$ as the columns of $\sqrt{X}$; however, these $\{x_i\}$ vectors are not directly relatable to any state or measurements. Noting that $X$ is the real part of PSD matrix $Y$, where $Y_{ij} = \mathrm{Tr}(\rho A_i A_j)$, allows us to introduce set of unit vectors $\{y_i\}$ as columns of $\sqrt{Y}$ matrix. In the special case when $\rho=|\psi\rangle\langle\psi|$ is a pure state, we can write $y_i = A_i|\psi\rangle$. For $X^{opt}$, we thus have $\displaystyle X^{opt}_{ij}= \mathrm{Tr} \left[\frac{1}{2}\tilde{\rho}\left(\tilde{A_{i}}\tilde{A_{j}}+\tilde{A_{j}}\tilde{A_{i}}\right)\right] =\mathrm{Re\left\{ \mathrm{Tr} \left[\tilde{\rho}\tilde{A_{i}}\tilde{A_{j}}\right]\right\} } =\mathrm{Re} \left\{ \langle\tilde{\psi}|\tilde{A_{i}}\tilde{A_{j}}|\tilde{\psi}\rangle\right\} =\mathrm{Re} \left\{ \left(\tilde{y_{i}},\tilde{y_{j}}\right)\right\} 
$ where in the third step we have assumed that $\tilde{\rho}$ is a pure state, since $X^{opt}$ can be obtained using any quantum state and suitable 2-outcome measurements. However, note that while defining the set $\{\tilde{y_{i}}\}_{i}$
we have a global isometry such that there could be another set of unit vectors $\{y'_{i}\}_{i}$ satisfying $\mathrm{Re}\left\{ \left(\tilde{y_{i}},\tilde{y_{j}}\right)\right\} =\mathrm{Re}\left\{ \left(Uy'_{i},Uy'_{j}\right)\right\} $
with $UU^{\dagger}=I$. 
This leads to the following,

\begin{corollary}
For the optimizer matrix $X^{opt}$  which maximizes an N-cycle inequality
of the kind \eqref{eq:sn}, for every set $\{y_{i}\}_{i}$ which
satisfies $X^{opt}_{ij}=\mathrm{Re}\left\{ \left(y_{i},y_{j}\right)\right\} $,
we have a global isometry $U$ and a reference set $\{\tilde{y_{i}}\}_{i}$
such that 
\begin{equation}\label{eq7}
\mathrm{Re}\left\{ \left(\tilde{y_{i}},\tilde{y_{j}}\right)\right\} =\mathrm{Re}\left\{ \left(Uy_{i},Uy_{j}\right)\right\} 
\end{equation}
Furthermore this implies the existence of reference measurements $\tilde{\{A_{i}\}_{i}}$ and pure state $\tilde{|\psi\rangle}$ such that $\displaystyle
\tilde{y_{i}}=\tilde{A_{i}}|\tilde{\psi}\rangle $. 
\end{corollary}

Note that in Bell inequality based self testing scenarios uniqueness up to local isometries of $A_i |\psi\rangle$ implies self test of the measurement $A_i$; however, in the temporal scenario due to the single system global isometry involved we can only claim that uniqueness of $X^{opt}$ fixes $\mathrm{Re}\{(y_i,y_j)\}$.

Having established the uniqueness of the optimizer matrix, next we ask how robust is this uniqueness property associated with $X^{opt}$. Precisely, let us imagine a scenario where one obtains the set of correlations $ \{ \langle A_i A_j \rangle_{seq} \}$ from an experimental setup and constructs a candidate matrix $X$ s.t. $X_{ij} = \langle A_i A_j \rangle_{seq}$. From the set of sequential correlations obtained experimentally, the value of the expression associated with an objective function (such as \eqref{eq:sn}) can be calculated. Assuming experimental imperfections, one can inquire how a small deviation of the value thus calculated from the maximal value (obtained using $X^{opt}$) relates to the distance between the matrices $X$ and $X^{opt}$. We formally define this notion as,\\

\textit{$(\epsilon, r)$ robust certification of temporal correlations:} Given 2 outcome observables $\{A_i \}_i$ and physical set of sequential correlations $\{ \langle A_i, A_{j} \rangle_{seq} \}_i$, they give an $(\epsilon, r)$ certificate of reference temporal correlations $\{\langle \tilde{A}_i, \tilde{A}_{j} \rangle \}_{ij}$ if the matrices $X$ and $\tilde{X}$ defined via $X_{i,j} = \langle A_i A_j \rangle$ (similarly for $\tilde{X}$) are close in Frobenius norm distance s.t.
\begin{equation*}
    |X - \tilde{X}| \leq \mathcal{O}(\epsilon^{r}).
\end{equation*}
We will make use of the above definition as our metric for the proximity of two sets of sequential/temporal correlations in our certification result. Let us first state the following useful lemma,

\begin{lemma} (taken from ~\cite{kishor_self_test}) \label{robust_contextuality}
Consider a pair of primal/dual SDPs (P) and (D), where the primal/dual values are equal and both are attained. Furthermore, assume that the set of feasible solutions of (P) is contained in some compact subset $U \subseteq \mathcal{S}^n$. Let $\mathcal{P}$ be the set of primal optimal solutions and $d$ be the singularity degree of (P) defined (in \cite{borwein1981facial}) as the least number of facial reduction steps required to make (P) strictly feasible, we have that
\begin{equation*}
    \textrm{dist}(\tilde{X}, \mathcal{P}) \leq \mathcal{O}(\epsilon^{2^{-d}}),
\end{equation*}
for any primal feasible solution $\tilde{X}$ with $p^* - \epsilon \leq \langle C, \tilde{X} \rangle$.
\end{lemma}

Now we are ready to establish our result concerning the robust certification of temporal correlations,

\begin{theorem} 
    Robustness: Consider the SDP \eqref{eq:seq_sdp}, given that matrix $X^{real}$ achieves a near-optimal value for the objective function $C=\sum_{ij}\lambda_{ij}X_{ij}$ such that $|\sum_{ij}\lambda_{ij}X^{opt}_{ij} - \sum_{ij}\lambda_{ij}X^{real}_{ij}| \leq \epsilon$ and $X^{opt}$ is unique then it follows that the correlations making up $X^{real}$ give an $(\epsilon,1)$ certificate of reference correlations making up $X^{opt}$ i.e. we can upper bound the Frobenius-norm distance between $X^{opt}$ and $X^{real}$ as
    \begin{equation}
        |X^{opt} - X^{real}| \leq \mathcal{O}(\epsilon).
    \end{equation}
    \label{thm:robustness}
\end{theorem}

\begin{proof}
Since the identity matrix belongs to the strictly feasible set of the SDP in~\eqref{eq:seq_sdp}, the singularity degree is $0$. Once we substitute $d=0$ in the statement of Lemma~\ref{robust_contextuality}, we recover our robustness statement.
\end{proof}

The multiplicative constant going with $\mathcal{O}(\epsilon)$ certification guarantee in Theorem \ref{thm:robustness} can be approximated by numerical SWAP based methods. Such SWAP based techniques along with SDP based methods were used to estimate constants in CHSH and CGLMP inequality based robust self tests in \cite{bancal2015physical}. Also, analytical robust self testing bounds for certain Bell inequalities have been obtained in \cite{kaniewski2016analytic,li2019analytic}. The sequential measurement scenario considered here is closer in spirit to the scenario considered in \cite{hu2022self} where numerical SWAP based techniques were adapted for robust self testing of a single quantum system. 

\begin{corollary}
It follows from Theorem \ref{thm:uniqueness} and \ref{thm:robustness} that near maximal violation of the inequality $S_N \leq N-2$, where $S_N$ is the N-cycle expression from \eqref{eq:sn} with quantum bound $\beta_q = N \cos{\frac{\pi}{N}}$, by the set $\{ \langle A_i, A_{i+1} \rangle_{seq} \}_i$ such that 
\begin{equation*}
    \left| \left(\sum_{i=1}^{N-1}\langle A_{i}A_{i+1}\rangle_{seq}-\langle A_{N}A_{1}\rangle_{seq}\right) - \beta_q \right| \leq \epsilon
\end{equation*} is an $(\epsilon, 1)$ certificate for the reference set $\{\langle \tilde{A}_i, \tilde{A}_{j} \rangle_{seq} \}_{ij}$ with $\langle \tilde{A}_i, \tilde{A}_{j} \rangle_{seq} = X^{opt}_{i,j}$. 
\end{corollary}

\section{Generalized scenario with quantum channels}
\label{sec:general_chn}
In this section we generalize the sequential measurement scenario considered in the previous part by introducing a quantum channel between the measurements of Alice and Bob as shown in Fig. \ref{fig_channel}. A similar setup is studied in \cite{chen2024semi} by defining a framework in terms of instrument moment matrices (IMMs) which are completely different from our construction of sequential correlation matrix. Our construction is inspired by the semi-definite optimization based formulation in prior works such as \cite{otfried_seq}. In the case of arbitrary channels $\mathcal{E}_{A|B}$ acting in between the time points the sequential correlation cannot be calculated using expression \eqref{eq:seq_cor}. Consider the following sequential measurement protocol (see Fig.~\ref{fig_channel}),

\begin{itemize}
    \item Step 1: Alice obtains a quantum state given by $\rho_A$  \\
    \item Step 2: Alice performs projective 2 outcome measurement $A_m$ on $\rho_A$. Note that the choice of $A_m$ is restricted to the set of measurements $\{ M_i \}$. The post-measurement state $\rho_m^{x}$ with outcome $x =\pm 1$ is \\
    \begin{equation*}
        \rho_m^{\pm 1} = \Pi^{\pm}_m \rho_A \Pi^{\pm}_m 
    \end{equation*}
    \item Step 3: Post measurement state $\rho_m^x$ is transferred to Bob via the channel $\mathcal{E}_{A|B}$,
    \begin{equation*}
        \mathcal{E}_{A|B} (\rho^{\pm 1}_m) = \mathcal{E}_{A|B} (\Pi^{\pm}_m \rho_A \Pi^{\pm}_m) 
    \end{equation*}
    \item Step 4: Bob performs projective 2 outcome measurement $A_n \in \{ M_i \}$ on $\mathcal{E}_{A|B} (\rho^x_m)$. The resulting sequential correlations are given by,
    \begin{align*}
        \langle A_m A_n \rangle_{seq} &= P(++ \text{ or } --) - P(+- \text{ or } -+) \\
        &= P(+_B|+_A)P(+_A) + P(-_B|-_A)P(-_A) - P(+_B|-_A)P(-_A) - P(-_B|+_A)P(+_A) \\
        &= \text{Tr}\left[ \Pi^+_n \mathcal{E}_{A|B} (\Pi^+_m \rho_A \Pi^+_m) \right] + \text{Tr}\left[ \Pi^-_n \mathcal{E}_{A|B} (\Pi^-_m \rho_A \Pi^-_m) \right] \\ & \quad -\text{Tr}\left[ \Pi^+_n \mathcal{E}_{A|B} (\Pi^-_m \rho_A \Pi^-_m) \right] - \text{Tr}\left[ \Pi^-_n \mathcal{E}_{A|B} (\Pi^+_m \rho_A \Pi^+_m) \right] \\
        &= \text{Tr}[(\Pi^+_n - \Pi^-_n)\mathcal{E}_{A|B} (\Pi^+_m \rho_A \Pi^+_m)] - \text{Tr}[(\Pi^+_n - \Pi^-_n)\mathcal{E}_{A|B} (\Pi^-_m \rho_A \Pi^-_m)] \\
        &= \text{Tr}[A_n \mathcal{E}_{A|B} (\Pi^+_m \rho_A \Pi^+_m)] - \text{Tr}[A_n \mathcal{E}_{A|B} (\Pi^-_m \rho_A \Pi^-_m)] \\
        &= \text{Tr} [\mathcal{E}_{A|B}^{\dagger}(A_n) (\Pi^+_m \rho_A \Pi^+_m - \Pi^-_m \rho_A \Pi^-_m)] \\
        &= \text{Tr}\Big[ \mathcal{E}_{A|B}^{\dagger}(A_n) \frac{(A_m \rho_A + \rho_A A_m)}{2} \Big].
    \end{align*}
\end{itemize}
where in the last line we switch the action of channel from $\mathcal{E}_{A|B}$ acting on state to $\mathcal{E}_{A|B}^{\dagger}$ acting on observable $A_n$. Similar to the previous case, we can define the matrix $X$ such that
\begin{equation*}
    X_{mn} = \langle A_m A_n \rangle_{seq} = \frac{1}{2} \mathrm{Tr} \Big[ (A_m \rho_A + \rho_A A_m) \mathcal{E}_{A|B}^{\dagger}(A_n) \Big].
\end{equation*}

\begin{figure}[h]
  \includegraphics[width=0.75\columnwidth]{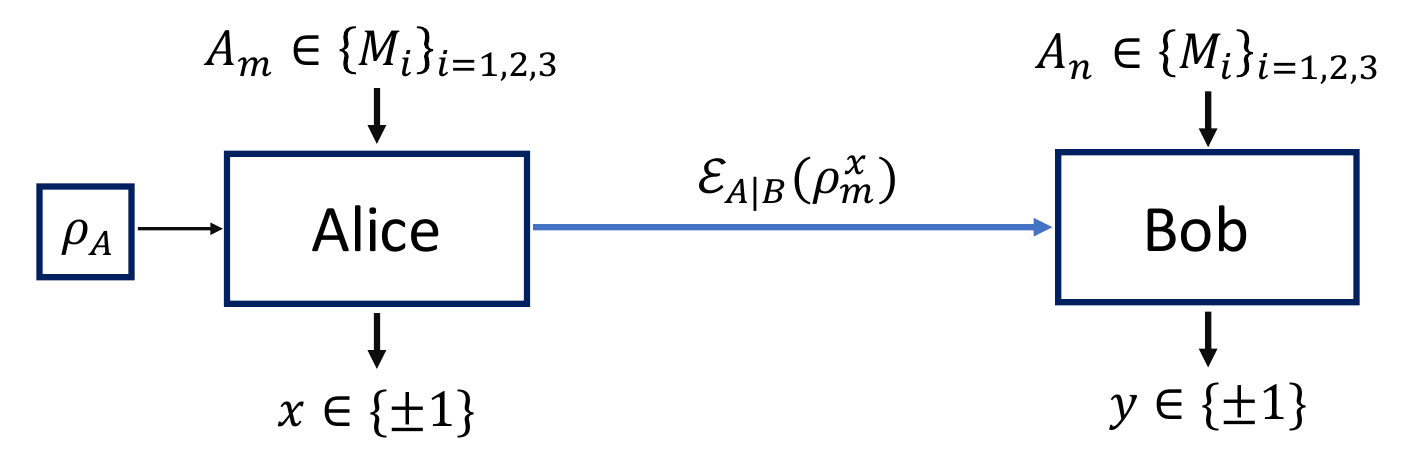}
  \caption{In the generalized scenario, Alice obtains black box containing $\rho_A$ from source and makes measurement $A_m \in \{M_i\}$ at time $t_A$, then the black box containing post measured state $\rho_m^x$ is sent via channel $\mathcal{E}_{A|B}$ to Bob who performs measurement $A_n \in \{M_i\}$ at time $t_B$.}
  \label{fig_channel}
\end{figure}

Unlike the previous case the matrix $X$ defined via $X_{mn} = \langle A_m A_n \rangle_{seq}$ is not generally PSD in this scenario. Hence, one cannot use the SDP method described earlier in Sec. \ref{subsec_otfried} to bound sequential measurement inequalities. The sequential measurement scenario can be viewed from the perspective of Pseudo Density Matrix (PDM) formalism (see section \ref{PDM_appendix} for a review) by considering the measurement events in a sequence and writing the corresponding PDM. The PDM formalism is more general since it can also take into account time evolution (via quantum channels) of the quantum state between the two sequential measurement events. We will make use of this feature of PDMs to make progress in the following section. Consider a 2-dimensional quantum state $\rho_A$ with sequential measurements being performed at time points $t_A$ and $t_B$, additionally the channel $\mathcal{E}_{A|B}$ acts on the post measured state at $A$ mapping operators from state space $\mathcal{H}_A$ at $t_A$ to state space $\mathcal{H}_B$ at $t_B$ (Fig.~\ref{fig_channel}). Then it can be shown that the PDM ($R_{AB}$) for this scenario is given by (see \cite{temporal_correlations_geometry} for a derivation)
\begin{align} \label{PDM_chn}
    R_{AB} &= (\mathcal{I}_A \otimes \mathcal{E}_{A|B}) \Big\{ \rho_A\otimes \frac{I}{2},\; \frac{1}{2} \sum_{i=0}^{3} \sigma_i \otimes \sigma_i \Big\}
\end{align}
where $\mathcal{I}_A$ is the identity superoperator and $\{A,B\} := AB + BA $. The utility of the PDM formalism will be made clear via its role in proving our main result in this scenario regarding channel certification (see Theorem \ref{channel_cert_thm}). To that end, let us first establish some useful lemmas.

\begin{lemma} \label{lemma_4}
In the generalized scenario described above, we can relate the sequential correlation to the relevant PDM as,
$$
\langle A_{m}A_{n}\rangle_{seq} = \mathrm{Tr}\left[(A_{m}\otimes A_{n})R_{AB}\right]
$$
\end{lemma}

\begin{proof} First let us define the Jamiolkowski isomorphic operator to $\mathcal{E}_{A|B}$ as
$$
E_{AB} = \sum_{ij}(\mathcal{I}_A \otimes \mathcal{E}_{A|B})(|i\rangle \langle j|_A \otimes |j\rangle \langle i|_B)
$$ which acts on  $\mathcal{H}_A \otimes \mathcal{H}_B$. Then starting with RHS, 

\begin{equation}
\begin{aligned}
\mathrm{Tr}\left[(A_{m}\otimes A_{n})R_{AB}\right] & =\mathrm{Tr}\left\{ (A_{m}\otimes A_{n})\left[\left(\rho_{A}\otimes\frac{\mathrm{I}}{2}\right)E_{AB}+E_{AB}\left(\rho_{A}\otimes\frac{\mathrm{I}}{2}\right)\right]\right\} \quad \text{using \eqref{PDM_chn}} \\
 & =\frac{1}{2}\left[\mathrm{Tr}\left(\left(A_{m}\rho\otimes A_{n}\right)E_{AB}\right)+\mathrm{Tr}\left(\left(A_{m}\otimes A_{n}\right)E_{AB}(\rho\otimes\mathrm{I})\right)\right] \\
 & =\frac{1}{2}\left[\sum_{ij}\mathrm{Tr}\left(A_{m}\rho|i\rangle\langle j|\right)\mathrm{Tr}\left(A_{n}\mathcal{E}_{A|B}\left(|j\rangle\langle i|\right)\right)+\sum_{ij}\mathrm{Tr}\left(A_{m}|i\rangle\langle j|\rho\right)\mathrm{Tr}\left(A_{n}\mathcal{E}_{A|B}\left(|j\rangle\langle i|\right)\right)\right]\\
 & =\frac{1}{2}\left[\sum_{ij}\langle j|(A_{m}\rho+\rho A_{m})|i\rangle\mathrm{Tr}\left(A_{n}\mathcal{E}_{A|B}\left(|j\rangle\langle i|\right)\right)\right] \\
 & =\frac{1}{2}\left[\sum_{ij}\langle j|(A_{m}\rho+\rho A_{m})|i\rangle\mathrm{Tr}\left(A_{n}\sum_{k}V_{k}|j\rangle\langle i|V_{k}^{\dagger}\right)\right] \text{ where $\{V_k\}$ are the Kraus ops. for $\mathcal{E}_{A|B}$} \\
 & =\frac{1}{2}\left[\sum_{ij}\langle j|(A_{m}\rho+\rho A_{m})|i\rangle\langle i|\sum_{k}V_{k}^{\dagger}A_{n}V_{k}|j\rangle\right] \\
 & =\frac{1}{2}\left[\sum_{j}\langle j|(A_{m}\rho+\rho A_{m})\sum_{k}V_{k}^{\dagger}A_{n}V_{k}|j\rangle\right] \\
 & =\frac{1}{2}\mathrm{Tr}\left[\sum_{j}|j\rangle\langle j|(A_{m}\rho+\rho A_{m})\sum_{k}V_{k}^{\dagger}A_{n}V_{k}\right]\\
 & =\frac{1}{2}\mathrm{Tr}\left[(A_{m}\rho+\rho A_{m})\mathcal{E}_{A|B}^{\dagger}(A_{n})\right] \\
 & =\langle A_{m}A_{n}\rangle_{seq}.
\end{aligned}    
\end{equation}
\end{proof}


 \textit{Sequential inequalities in the general scenario:} Let us consider the sequential inequalities from the previous section; however, being constructed out of sequential correlations of the generalised scenario with the quantum channel acting between the measurements. Having established a formal connection of the PDM formalism with sequential correlations in Lemma~\ref{lemma_4}, we can put it to use to explore the maximal violation of sequential inequalities in the generalized scenario and implication on the underlying quantum channel. Recall the $N$-cycle sequential correlation expression \eqref{eq:sn} and the associated inequality,
 $$
 S_N = \sum_{i=1}^{N-1}\langle A_{i}A_{i+1}\rangle_{seq}-\langle A_{N}A_{1}\rangle_{seq} \leq N-2.
 $$
 Using Lemma \ref{lemma_4} we can rewrite the above as
\begin{equation}\label{seq_ineq_PDM}
    \mathrm{Tr} \Bigg[ \Bigg(\sum_{i=1}^{N-1} A_{i}\otimes A_{i+1} - A_{N}\otimes A_{1} \Bigg)R_{AB}\Bigg] \leq N-2,
\end{equation}
where it is interesting to note that \eqref{seq_ineq_PDM} possesses a structure similar to Bell inequalities under entanglement-based scenarios where the tensor product appears due to spatially separated Hilbert spaces (see Section \ref{sec:physical}).

\textit{Channel certification using N-cycle sequential inequality:} Consider the case of N-cycle inequality described above with $N=3$ which takes the form of the well known Leggett-Garg inequality \cite{leggett1985quantum},
\begin{equation}\label{eq:3cycle}
    S_3 \equiv \langle A_1 A_2 \rangle_{seq} + \langle A_2 A_3 \rangle_{seq} - \langle A_3 A_1 \rangle_{seq} \leq 1.
\end{equation}
We will consider maximal violation of this inequality in the presence of channels. Throughout the draft we highlighted the various degrees of freedom involved in obtaining correlations from sequential measurements on a quantum system. We saw that in the simple case without channels being present, sequential correlations depend only on the angle between the observables and not on the particular direction of the observable and not even on the underlying quantum state. However, in the generalised scenario with channels other isometries also come into play thus allowing us to choose non ideal angles to obtain maximal violation of the 3-cycle sequential inequality \eqref{eq:3cycle}. Following this discussion, we do not base our self testing statement on certification of the quantum state or of the measurements but rather on the quantum channel acting in between the sequence of measurements. A protocol for channel certification has been previously described in \cite{channel2018certifying} where the physical channel was compared to a target reference channel via the action of the channel on the maximally entangled state thus requiring the physical input state to be an entangled one. On the contrary, our channel certification scheme does not require entanglement but utilizes sequential correlations instead.

\begin{theorem} \label{channel_cert_thm}
Temporal certification statement: In a sequential measurement setting with channel $\mathcal{E}_{A|B}$ acting between the sequence of 2-dim. measurements $\{A_i\}$ on 2-dim. state $\rho$, maximal violation of the 3-cycle sequential inequality \eqref{eq:3cycle} implies that $\mathcal{E}_{A|B}$ is a 1-qubit Pauli channel with Kraus rank 1.
\end{theorem}

\begin{proof}
We first prove the following useful result,
\begin{lemma} \label{lemma:pauli_corr}
Consider the correlation $\langle \sigma_k \sigma_l \rangle_{seq}$ in the generalized scenario with channels with $\sigma_{k/l}$ as the Pauli observables and $\rho_A$ being 2-dimensional. Then,
\begin{equation}
    \langle \sigma_k \sigma_l \rangle_{seq} = \frac{1}{2}\Big[ \langle \sigma_k \rangle_{\rho_A} \mathrm{Tr}(\sigma_l \mathcal{E}_{A|B}(\sigma_0)) + \mathrm{Tr}(\sigma_l \mathcal{E}_{A|B}(\sigma_k)) \Big]
\end{equation}
\end{lemma}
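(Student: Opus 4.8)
The plan is to reduce the statement to a single-qubit anticommutator identity for Pauli matrices. First I would recall from the main text (equivalently, from Lemma~\ref{lemma_4}) that in the channel scenario
\begin{equation*}
\langle \sigma_k \sigma_l \rangle_{seq} = \frac{1}{2}\,\mathrm{Tr}\Big[(\sigma_k \rho_A + \rho_A \sigma_k)\,\mathcal{E}_{A|B}^{\dagger}(\sigma_l)\Big].
\end{equation*}
Using the defining adjoint property $\mathrm{Tr}[M\,\mathcal{E}^{\dagger}(N)] = \mathrm{Tr}[\mathcal{E}(M)\,N]$ together with cyclicity of the trace, this rewrites as $\tfrac12\,\mathrm{Tr}\big[\sigma_l\,\mathcal{E}_{A|B}(\sigma_k \rho_A + \rho_A \sigma_k)\big]$, so that the channel (not its adjoint) now acts on the operator $\{\sigma_k,\rho_A\}$.

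The key step, and the only place the 2-dimensionality of $\rho_A$ is used, is to evaluate this anticommutator. Writing the qubit state in Bloch form $\rho_A = \tfrac12\sum_{j=0}^{3}\langle\sigma_j\rangle_{\rho_A}\,\sigma_j$ with $\langle\sigma_0\rangle_{\rho_A}=1$, and invoking the Pauli relations $\{\sigma_k,\sigma_j\} = 2\delta_{kj}\,\sigma_0$ for $k,j\in\{1,2,3\}$ and $\{\sigma_k,\sigma_0\} = 2\sigma_k$, all cross terms cancel and one is left with
\begin{equation*}
\sigma_k \rho_A + \rho_A \sigma_k = \sigma_k + \langle\sigma_k\rangle_{\rho_A}\,\sigma_0 .
\end{equation*}
Substituting this back into the trace and using linearity of $\mathcal{E}_{A|B}$ and of the trace immediately gives
$\langle\sigma_k\sigma_l\rangle_{seq} = \tfrac12\big[\mathrm{Tr}(\sigma_l\,\mathcal{E}_{A|B}(\sigma_k)) + \langle\sigma_k\rangle_{\rho_A}\,\mathrm{Tr}(\sigma_l\,\mathcal{E}_{A|B}(\sigma_0))\big]$, which is the claim.

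There is no serious obstacle: the result is essentially a bookkeeping consequence of the channel-scenario correlation formula combined with Pauli algebra. The only points requiring care are, first, that the anticommutator simplification genuinely relies on $\sigma_k$ being a traceless Pauli observable on a single qubit — so that $\{\sigma_k,\sigma_j\}$ is a multiple of the identity — and fails both for higher-dimensional $\rho_A$ and for the trivial choice $\sigma_k=\sigma_0$; and second, that one keeps the orientation of the adjoint relation straight so that $\mathcal{E}_{A|B}$, rather than $\mathcal{E}_{A|B}^{\dagger}$, acts on $\sigma_k$ and $\sigma_0$ in the final expression, matching the stated form of the lemma.
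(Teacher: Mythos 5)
Your proof is correct and is essentially the paper's argument: both reduce the claim to Pauli anticommutation, the paper by inserting the PDM expression \eqref{PDM_chn} into $\mathrm{Tr}[(\sigma_k\otimes\sigma_l)R_{AB}]$ via Lemma~\ref{lemma_4} and evaluating $\mathrm{Tr}\big(\rho_A\{\sigma_i,\sigma_k\}\big)$ term by term, you by starting from the equivalent adjoint-channel formula and simplifying $\{\sigma_k,\rho_A\}$ through the Bloch decomposition. The two are the same computation read from opposite ends of Lemma~\ref{lemma_4}, and your caveats (tracelessness of $\sigma_k$, orientation of the adjoint) are apt.
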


\begin{proof}
\begin{align}
    \langle\sigma_k \sigma_l \rangle_{seq} &= \mathrm{Tr} [(\sigma_k \otimes \sigma_l)R_{AB}] \quad \mathrm{using\;Lemma\;\ref{lemma_4}} \nonumber \\
            &= \mathrm{Tr} \left[(\sigma_k \otimes \sigma_l) (\mathcal{I}_A \otimes \mathcal{E}_{A|B}) \Big\{ \rho_A\otimes \frac{I}{2},\; \frac{1}{2} \sum_{i=0}^{3} \sigma_i \otimes \sigma_i \Big\} \right] \quad \mathrm{using\;\eqref{PDM_chn}\;\textrm{from MT}} \nonumber \\
            &= \frac{1}{4} \mathrm{Tr} \left[(\sigma_k \otimes \sigma_l) (\mathcal{I}_A \otimes \mathcal{E}_{A|B})  \sum_{i=0}^{3} \{\rho_A, \sigma_i\} \otimes \sigma_i \right] \nonumber \\
            &= \frac{1}{4} \mathrm{Tr} \left[\sum_{i=0}^{3} \sigma_k \{\rho_A, \sigma_i\} \otimes \sigma_l \mathcal{E}_{A|B}(\sigma_i) \right] \nonumber \\
            &= \frac{1}{4} \sum_{i=0}^{3} \mathrm{Tr}(\sigma_k \rho_A \sigma_i + \sigma_k \sigma_i \rho_A).\mathrm{Tr}(\sigma_l \mathcal{E}_{A|B}(\sigma_i)) \nonumber \\
            &= \frac{1}{4} \sum_{i=0}^{3} \mathrm{Tr} (\rho_A(\sigma_i \sigma_k + \sigma_k \sigma_i)).\mathrm{Tr}(\sigma_l \mathcal{E}_{A|B}(\sigma_i)) \nonumber \\
            &= \frac{1}{2} \left[ \langle \sigma_k \rangle_{\rho_A} \mathrm{Tr}(\sigma_l \mathcal{E}_{A|B}(\sigma_0)) + \sum_{i=1}^{3} \delta_{ik} \mathrm{Tr}(\sigma_l \mathcal{E}_{A|B}(\sigma_i)) \right] \nonumber \\
            &= \frac{1}{2}\Big[ \langle \sigma_k \rangle_{\rho_A} \mathrm{Tr}(\sigma_l \mathcal{E}_{A|B}(\sigma_0)) + \mathrm{Tr}(\sigma_l \mathcal{E}_{A|B}(\sigma_k)) \Big]. \nonumber
\end{align}
\end{proof}

Proceeding with the proof of Theorem \ref{channel_cert_thm}, take $\rho_A = |0\rangle \langle 0|$ since we expect maximal violation to be obtainable using a pure state and rotational freedom of observables allows us to take it as an eigenstate of $\sigma_z$ . Then using Lemma \ref{lemma:pauli_corr}, we get
\begin{equation}\label{eq:pauli_corr}
    \langle \sigma_k \sigma_l \rangle_{seq} =
\begin{cases}
    \mathrm{Tr}(\sigma_l \mathcal{E}_{A|B}(\sigma_k))/2, \text{ for } k=1,2\\
    \mathrm{Tr}(\sigma_l \mathcal{E}_{A|B}(\sigma_0) + \sigma_l \mathcal{E}_{A|B}(\sigma_3))/2, \text{ for } k=3
\end{cases}
\end{equation}

Furthermore, it has been established in \cite{channels_M2} that all possible quantum channels acting on 2-dim. states correspond to the convex closure
of the maps parametrized in the Pauli basis $\{\sigma_{i}\}_{i}$
using the following Kraus operators,
\begin{equation} \label{eq:chn_convex}
\begin{array}{cc}
K_{+} & =\left[\cos\frac{v}{2}\cos\frac{u}{2}\right]\sigma_{0}+\left[\sin\frac{v}{2}\sin\frac{u}{2}\right]\sigma_{3}\\
K_{-} & =\left[\sin\frac{v}{2}\cos\frac{u}{2}\right]\sigma_{1}-i\left[\cos\frac{v}{2}\sin\frac{u}{2}\right]\sigma_{2}
\end{array}
\end{equation}
with $v\in[0,\pi]$, $u\in[0,2\pi]$. 
Action of the channel $\mathcal{E}_{A|B}(\sigma_i)$ can be written using these Kraus operators as,
\begin{equation}\label{kraus_ops}
\begin{aligned}K_{+}\sigma_{0}K_{+}^{\dagger}+K_{-}\sigma_{0}K_{-}^{\dagger} & =\sigma_{0}+\sin(u)\sin(v)\sigma_{3}\\
K_{+}\sigma_{1}K_{+}^{\dagger}+K_{-}\sigma_{1}K_{-}^{\dagger} & =\cos(u)\sigma_{1}\\
K_{+}\sigma_{2}K_{+}^{\dagger}+K_{-}\sigma_{2}K_{-}^{\dagger} & =\cos(v)\sigma_{2}\\
K_{+}\sigma_{3}K_{+}^{\dagger}+K_{-}\sigma_{3}K_{-}^{\dagger} & =\cos(u)\cos(v)\sigma_{3}.\\
\end{aligned}
\end{equation}
Substituting \eqref{kraus_ops} into \eqref{eq:pauli_corr} gives
\begin{align}
\label{pauli_corr_angles}
    \langle \sigma_k \sigma_l \rangle_{seq} &= \delta_{kl} (\delta_{1k} \cos u + \delta_{2k} \cos v + \delta_{3k} \cos (u-v))
\end{align}
Using \eqref{pauli_corr_angles} for $S_3$ along with the fact that for 2-dimensional observables $A_{m/n} = \Vec{a}_{m/n}. \Vec{\sigma}$ we can write $\langle A_m A_n \rangle_{seq} = \sum_{k,l} a_{mk} a_{nl} \langle\sigma_k \sigma_l \rangle_{seq}$ gives,

\begin{equation}
\begin{aligned} \label{eq:S3}
 S_3 &= \sum_{k,l=1}^{3} \big[ (a_{1k}a_{2l} + a_{2k}a_{3l} - a_{3k}a_{1l})
    \delta_{kl}(\delta_{1k} \cos u + \delta_{2k} \cos v + \delta_{3k} \cos (u-v)) \big] \\
    &= \sum_{k=1}^{3} \big[ (a_{1k}a_{2k} + a_{2k}a_{3k} - a_{3k}a_{1k})
    (\delta_{1k} \cos u + \delta_{2k} \cos v + \delta_{3k} \cos (u-v)) \big] \\
\end{aligned}
\end{equation}

From \eqref{eq:S3}, it can be seen that the maximal value achieved is $3/2$ in the cases with $\{u=v=0\}$, $\{u=v=\pi\}$, $\{u=0, v=\pi \}$ and $\{u=\pi, v=0 \}$ which correspond to special cases of the Pauli channel. However, note that the angle between the measurement choices $\{ A_i \}$ would depend on the particular case of the channel $\mathcal{E}_{A|B}$. Hence, taking a convex combination of channels corresponding to the cases above will not give maximal violation for a particular choice of 3 measurements $\{ A_i \}_{i=1,2,3}$. This result again highlights the unitary degree of freedom in choosing the channel as described in \eqref{eq:local_iso}. Interestingly, the 4 channels corresponding to the cases above also correspond to the 4 ``pseudo-Bell states"
\begin{align}
    R_{AB}^{(1)} &= \frac{1}{4} (I + X\otimes X + Y \otimes Y + Z \otimes Z) \nonumber \\[0.75em]
    R_{AB}^{(2)} &= \frac{1}{4} (I + X\otimes X - Y \otimes Y - Z \otimes Z) \nonumber \\[0.75em]
    R_{AB}^{(3)} &= \frac{1}{4} (I - X\otimes X + Y \otimes Y - Z \otimes Z) \nonumber \\[0.75em]
    R_{AB}^{(4)} &= \frac{1}{4} (I - X\otimes X - Y \otimes Y + Z \otimes Z) \nonumber
\end{align}
which were used for showing temporal evolution as an analogue of spatial teleportation in \cite{temporal_teleport}.
\end{proof}

\section{Physical implications and Applications}
\label{sec:physical}
Next, we provide some physical implications of our main results followed by some potential applications of the main theorems. 

\textit{Unique set of correlations:} We establish in Theorems \ref{thm:uniqueness} and \ref{thm:robustness} that for the $N$-cycle inequality in the sequential measurement setup the set of correlations $\{A_i A_j\}_{seq}$, encoded via the matrix $X$, that achieve the quantum bound are unique and the uniqueness is robust to small deviations from the maximal value. Since the underlying single system isometries do not allow us to self-test the state or the measurement POVMs, we phrase our results in terms of the uniqueness of the set of correlations that maximally violate the inequality. Such a unique set of correlations for $\{A_i A_j\}_{seq}$ uniquely restricts the relationship between the measurements performed on the quantum system. For example, in the single qubit case with $N=3$, the successive measurement vectors on the Bloch sphere must be at $60^{\circ}$ to achieve maximal violation. This physical restriction carries over to higher values of $N$ as well. We contrast our certification scheme with the entanglement based self-testing formalism in Table \ref{table_1}. \\

\begin{table}
\centering
\begin{tabular}{|p{0.5\textwidth}|p{0.5\textwidth}|}
\hline  
    Certification of temporal correlations & Entanglement based self-testing\\
    \hline
\begin{itemize}[leftmargin=*]
  \item Based on non-classical temporal correlations
  \item Such correlations cannot be explained by models that obey the assumptions of Macro-realism
  \item Under the SDP formulation, we show uniqueness of optimizer matrix for the $N$-cycle temporal inequality; however, this does not self test the underlying state or measurements due to single system isometries involved. Instead, we obtain a unique set of correlations that restricts the relationship between measurement operators.
  \item Further, considering the generalized scenario with quantum channels we show that maximal violation of 3-cycle inequality certifies channel type and its Kraus rank. In the PDM formalism, the channels achieving this maximal violation correspond to the ``pseudo-Bell" states \cite{temporal_teleport}.
  \end{itemize} &
  \begin{itemize}[leftmargin=*]
  \item Based on non-local spatial correlations
  \item Such correlations cannot be explained by Local Hidden Variable models
  \item Maximal violation of Bell inequalities self tests underlying entangled state and measurements \cite{supic_self} with robust extensions known in certain scenarios \cite{RST_steering}.
  \item SDP formulations can also be constructed for Bell scenarios \cite{bharti2022graph}. Uniqueness of the optimizer matrix, if it holds as is the case for the CHSH inequality, corresponds to a self test of the underlying state (Bell state for the CHSH case) and measurements.
\end{itemize}\\
\hline
\end{tabular}
\caption{Comparison of our temporal correlations based certification scheme with entanglement based self testing scheme for state and measurement certification.}
\label{table_1}
\end{table}

\textit{Pseudo-density matrix as a unified formalism:} Consider the spatial analogue of inequality \eqref{seq_ineq_PDM} with $N=4$ where the measurement events are performed on spatially separated qubits (i.e. the CHSH inequality). We can infer that the PDMs maximally violating the inequality (with $N=4$) are the familiar maximally entangled Bell states which are valid density matrices. This example highlights an important feature of the PDM formalism as the one that can unite both spatial and temporal correlation based inequalities. If we restrict the PDM to be a valid density matrix by constraining it to be positive semi-definite (which in turn enforces the measurements to be performed on spatially separated qubits as described in Section \ref{PDM_appendix}), we recover results from Bell scenarios. However, lifting the constraint of positive semi-definiteness of PDMs also allows for dealing with sequential inequalities of the generalized scenario. \\

\textit{Local isometry in time:} As a follow up to the proof of Lemma \ref{lemma_4}, one can note that the tensor structure introduced between observables $A_m$ and $A_n$ allows for certain local isometries to act in the temporal sense. See that,

 \begin{equation} \label{eq:local_iso}
    \begin{aligned}
 \mathrm{Tr}\left[(A_{m}\otimes A_{n})R_{AB}\right] &= \mathrm{Tr}\left[(A_{m}\otimes A_{n})(\mathcal{I}_A \otimes \mathcal{E}_{A|B}) \Big\{ \rho_A\otimes \frac{I}{2},\; \frac{1}{2} \sum_{i=0}^{3} \sigma_i \otimes \sigma_i \Big\} \right] \\
        &= \mathrm{Tr} \left[(A_{m}\otimes A_{n}) \frac{1}{2} \sum_{i=0}^3 (\{\rho_A, \sigma_i\} \otimes \mathcal{E}_{A|B} (\sigma_i) \right] \\
        &= \frac{1}{2} \sum_{i=0}^3 \mathrm{Tr} \left[ A_m (\rho_A \sigma_i + \sigma_i \rho_A) \right] \mathrm{Tr} \left[ A_n \mathcal{E}_{A|B} (\sigma_i)  \right] \\
        &= \frac{1}{2} \sum_{i=0}^3 \mathrm{Tr} \left[ V A_m V^{\dagger} ( V\rho_AV^{\dagger} V\sigma_i V^{\dagger} + V\sigma_i V^{\dagger} V\rho_A V^{\dagger}) \right] \mathrm{Tr} \left[ UA_n U^{\dagger} \tilde{\mathcal{E}}_{A|B} (V\sigma_i V^{\dagger})  \right] \\
        &= \frac{1}{2} \sum_{i=0}^3 \mathrm{Tr} \left[ \tilde{A}_m ( \tilde{\rho}_A \sigma_i + \sigma_i \tilde{\rho}_A) \right] \mathrm{Tr} \left[ \tilde{A}_n \tilde{\mathcal{E}}_{A|B} (\sigma_i)  \right] \\
        &= \mathrm{Tr}\left[(VA_{m}V^{\dagger} \otimes U A_{n} U^{\dagger})(\mathcal{I}_A \otimes \tilde{\mathcal{E}}_{A|B}) \Big\{ V \rho_A V^{\dagger} \otimes \frac{I}{2},\; \frac{1}{2} \sum_{i=0}^{3} \sigma_i \otimes \sigma_i \Big\} \right] \\
\end{aligned}
\end{equation}
where $\tilde{\mathcal{E}}_{A|B}$ has transformed Kraus operators as $\{K_i\} \longrightarrow \{ UK_i V^{\dagger} \}$, $U$ is a unitary operator acting at time point $t_B$ and $V$, acting at $t_A$, is restricted to $V \in \{ X,Y,Z\}$ such that applying the transformation $V \sigma_i V^{\dagger}$ conserves the elements of the set $\sigma_i \in \{ \mathbb{I}, X, Y, Z \}$. Note that we make this particular choice for $V$ just to highlight the local isometry in time, there exist other choices for $V$ such as the Clifford group that would also conserve the set of Pauli matrices. Thus, the transformation of the channel $\mathcal{E}_{A|B}$'s Kraus operators  along with corresponding rotation of observables at times $t_A$ and $t_B$ does not change the correlation value. This local isometry in time is reminiscent of the local isometries present in the scenario with spatially separated entangled states and observables. \\

\textit{Applications:} We start by noting as the foremost application that obtaining an $(\epsilon,1)$ certificate on the full set of correlations only requires experimentally obtaining a subset of the correlations (for which $\lambda_{ij} \neq 0$). For example in the case of N-cycle objective function \eqref{eq:sn}, one only needs to measure the cyclic correlations $\{\langle A_{i} A_{i+1} \rangle \}$ scaling as $\mathcal{O}(N)$. Near optimal value of this subset gives a  robust guarantee on all pairs of correlations  $\{\langle A_{i} A_{j} \rangle \} \, \forall i,j$ which scales as $\mathcal{O}(N^2)$ (see Fig.~\ref{fig_thm_app}).

\begin{figure}[h]
  \includegraphics[width=0.85\columnwidth]{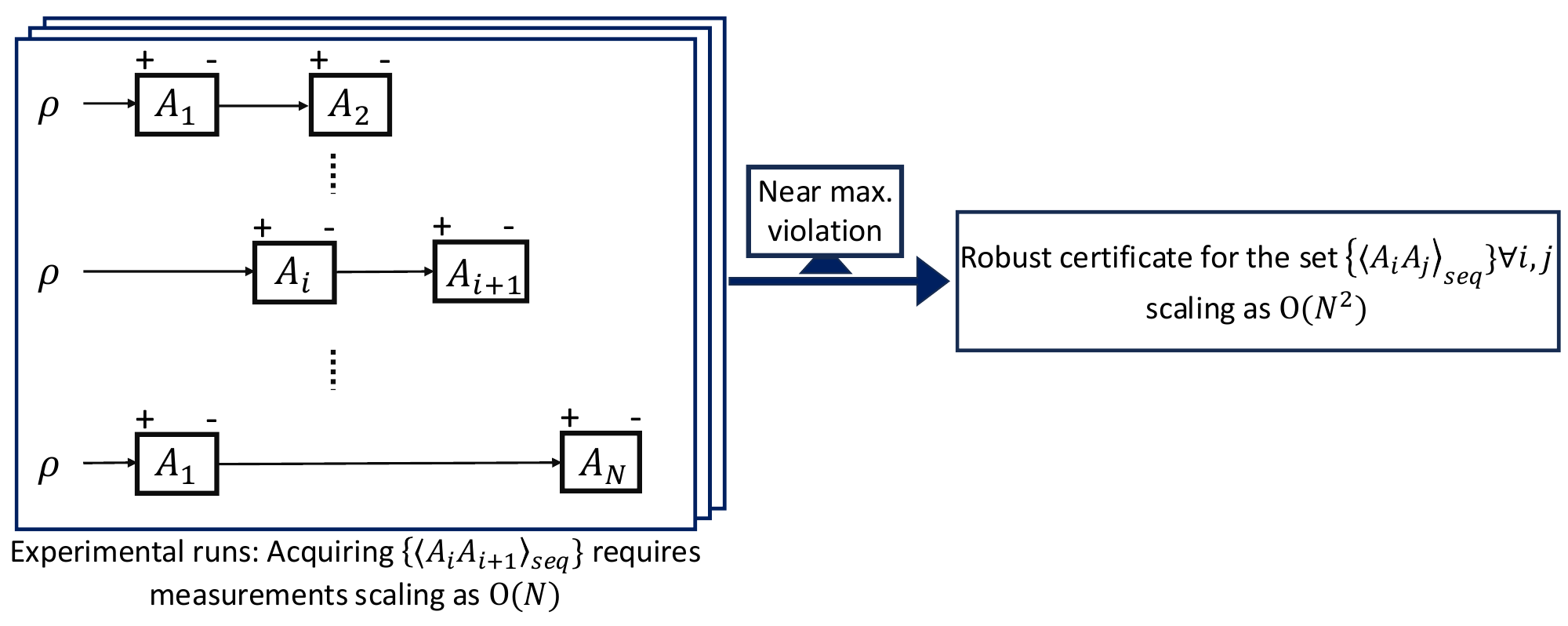}
  \caption{Our theorems allow an experimenter to obtain guarantees on all possible correlations of sequential measurements by measuring only a subset of sequential correlations, given that the measured correlations obtain near maximal violation.}
  \label{fig_thm_app}
\end{figure}

The full set of correlations thus obtained can be utilised for various applications such as dimension witnesses \cite{temporal_dim_witness,hoffmann2018structure}; under assumption of dimension of state being 2 they can be used for lower bounding purity \cite{spee2020certifying}. However, both these applications have been proposed for generalised measurements. Other applications which could be formulated for projective measurements are measurement device certification in semi-device independent scenario \cite{mohan2019sequential,das2022temporal}. In particular we can utilise (and also implement experimentally) the temporal inequalities proposed in \cite{das2022temporal} to certify measurement devices under the assumption that the initially prepared state is maximally mixed.

Coming to the channel certification result, previous schemes \cite{channel2018certifying} for quantum channel certification utilized entanglement as a resource where the channel was certified via its action on one half of a Bell state. Our scheme circumvents this requirement by utilizing sequential correlations on a single system. Secondly, our result can also be utilised for certifying building blocks for quantum circuits and  computing architectures. Since such architectures have been proposed to incorporate qubits, we can work under the assumption of 2-dim. states and utilise Theorem \ref{channel_cert_thm} to certify the unitarity of single qubit gates modeled as channels (this requires establishing the robustness properties of channel certification). Further, certain proposals \cite{qpt_baldwin} for quantum process tomography work under the assumption of unitarity of the underlying CPTP process. Working under this assumption they show improvements in number of elements required for characterizing the unitary channel. Our channel certification scheme could be used as a pre-cursor to guarantee the unitarity of the quantum channel.

\section{Discussion}
\label{sec:conclusion}
In this work we study correlations arising in 2-outcome sequential measurement scenarios. Unlike spatial correlations obtained in entanglement based scenarios, such sequential/temporal correlations have not seen many applications in the literature. Motivated by the SDP based formulations for bounding temporal correlations, we showed uniqueness of the optimizer matrix as well as establish robustness of this uniqueness property. Since the optimizer matrix is made up of sequential correlations between all pairs of measurements $\{ A_i \}$, it follows as a consequence of our results that near maximal violation of temporal correlation based inequalities can be used to obtain robust certificates of the set of sequential correlations $\{ \langle A_i, A_j \rangle_{seq} \}$. As an application of this result we highlight that any temporal inequality can be mapped to a cost function in the SDP formulation. Thus, near maximal violation of such an inequality requires experimentally measuring a subset of correlations usually scaling as $\mathcal{O}(N)$ where $N$ is the number of distinct 2-outcome measurements. However, by virtue of our result the certification of the full optimizer matrix certifies the full set of correlations scaling as $\mathcal{O}(N^2)$.

Next, we considered the generalized scenario with quantum channels acting in between the sequence of measurements to transfer the quantum state between 2 parties. We connected sequential correlations obtained in this scenario with Pseudo-density matrices (PDMs) via Lemma \ref{lemma_4}. This connection allows us to establish analogies between the structure of spatial and temporal inequalities along with local isometries involved. Further, we show that maximal violation of sequential inequality \eqref{eq:sn} $S_N \leq N-2 $ with  $N=3$ in the generalised scenario implies unitarity of the channel (Pauli channel with Kraus rank 1). This result could prove to be useful in implementing quantum process tomography protocols \cite{gutoski2014process, qpt_baldwin} which make the assumption of working with unitary channels. 

In the present work, our channel certification result is exact since it requires exact maximal violation to certify the channel. However, to be experimentally testable we require a robust version of our result i.e. future work could explore if near-maximal violation ($\mathcal{O}(\epsilon)$) of the inequality \eqref{eq:3cycle} certifies the near-unitarity of the Pauli channel with small error ($\mathcal{O}(\epsilon^{1/n})$). Further, we focus on 2-dim. states and measurements in our channel certification result owing to the fact that all possible quantum channels in this scenario admit a parametrization as a convex closure in the Pauli basis \eqref{eq:chn_convex}. It would be interesting to see if such parametrizations could be found for the general case of d-dim. states and measurements. This would pave the way for certifying quantum channels in the general case.

\section*{Acknowledgements}
The authors thank Otfried Gühne,  Valerio Scarani and Adan Cabello for helpful discussions. We are grateful to the National Research Foundation and the Ministry of Education, Singapore for financial support.  K.B.~acknowledges funding by AFOSR, DoE QSA, NSF QLCI (award No.~OMA-2120757), DoE ASCR Accelerated Research in Quantum Computing
program (award No.~DE-SC0020312), NSF PFCQC program, the DoE ASCR
Quantum Testbed Pathfinder program (award No.~DE-SC0019040),
U.S.~Department of Energy Award No.~DE-SC0019449, ARO MURI, AFOSR
MURI, DARPA SAVaNT ADVENT and A*STAR C230917003.

\section*{Data Availability}
Data sharing is not applicable to this article as no new data were created or analyzed in this study.

\section*{Author Declarations}
\subsection{Conflicts of Interest}
The authors have no conflicts to disclose.

\bibliography{temporal_cert.bib}

\end{document}